\newtheorem{theorem}{Theorem}[section]
\newtheorem{assumption}[theorem]{Assumption}
\newtheorem{proposition}[theorem]{Proposition}
\newtheorem{Definition}[theorem]{Definition}
\newtheorem{remark}[theorem]{Remark}
\newcommand{\R}{{\mathbb{R}}}
\newcommand{\N}{{\mathbb{N}}}
\title{Controller for Incremental Input-to-State Practical Stabilization of Partially Unknown Systems with Invariance Guarantees}
\author{P Sangeerth$^*$, David Smith Sundarsingh$^*$, Bhabani Shankar Dey, and Pushpak Jagtap 
\thanks{This work was supported in part by the ARTPARK, and the Siemens.}
\thanks{$^*$Authors contributed equally.}
\thanks{D.S.Sundarsingh is with Washington University in St. Louis, Missouri, USA {\tt\small d.s.sundarsingh@wustl.edu}} \thanks{P. Sangeerth, B.S. Dey, and P. Jagtap are with Robert Bosch Centre for Cyber-Physical Systems, IISc, Bangalore, India {\tt\small sangeerthp,bhabanishan1,pushpak@iisc.ac.in}}}
\newcolumntype{L}{>{\RaggedRight\hangafter=1\hangindent=1em}X}
\begin{document}

\maketitle
\thispagestyle{empty}
\pagestyle{empty}

\begin{abstract}
Incremental stability is a property of dynamical systems that ensures the convergence of trajectories with respect to each other rather than a fixed equilibrium point or a fixed trajectory. In this paper, we introduce a related stability notion called incremental input-to-state practical stability ($\delta$-ISpS), ensuring safety guarantees. We also present a feedback linearization based control design scheme that renders a partially unknown system incrementally input-to-state practically stable and safe with formal guarantees. To deal with the unknown dynamics, we utilize Gaussian process regression to approximate the model. Finally, we implement the controller synthesized by the proposed scheme on a manipulator example.
\end{abstract}
\section{Introduction}
Incremental stability is a strong property of nonlinear systems that focuses on the convergence of trajectories with respect to each other rather than to a specific trajectory or equilibrium point. In recent years, this notion has gained significant attention due to its application in synchronization of complex networks \cite{synchComplex} and interconnected systems \cite{synchOsci, InterIncre}, and the construction of scalable symbolic models for nonlinear control systems \cite{girard2014approximately,jagtap2020symbolic}.

A particular class of incremental stability, Incremental Input-to-State Stability ($\delta$-ISS), has been extensively studied. Similar to other stability notions, Lyapunov functions have been used to characterize $\delta$-ISS as shown in \cite{angeli,zamanicharacterize}. Furthermore, state feedback controllers have been designed to render a class of control systems $\delta$-ISS. A few examples include works on stochastic systems \cite{pushpakHamilton}, unstable non-smooth control systems \cite{zamaninonsmooth}, and backstepping approaches \cite{backsteppingzamani}. However, all approaches require knowledge of the system in order to design a controller. To the best of the author's knowledge, there has been no work on controller synthesis for $\delta$-ISS stabilization of an unknown system, along with giving formal safe guarantees. In this paper, we design a controller for strict feedback nonlinear systems with partially unknown dynamics by using the Gaussian process (GP) to approximate the model of the system. This will act as the solution for a large class of systems, including Euler-Lagrange systems.

Gaussian process \cite{GPBook} is a data-driven and non-parametric learning approach that has been used in system identification for control. Several works exist in the literature where GP has been used in this fashion, including works on tracking control \cite{tracking}, feedback linearization \cite{feedbackLinearization}, control Lyapunov function (CLF) approach \cite{hircheControl}, and control barrier functions (CBF) \cite{JagtapGP}. Unfortunately, GP models are not perfect, making it infeasible to design a controller that ensures strict $\delta$-ISS by using the learned model of the system. To solve this problem, motivated by the notion of  Input-to-State practical Stability (ISpS) \cite{ISpS}, we first relax the notion of $\delta$-ISS property to Incremental Input-to-State Practical Stability ($\delta$-ISpS). As a GP-based method is a data-driven technique, there is no state space invariance guarantee unless taken care of explicitly. Therefore, we provide a characterization of the $\delta$-ISpS property in terms of $\delta$-ISpS Lyapunov functions and then develop a feedback linearization control scheme that ensures $\delta-$ISpS with state-space invariance safety guarantees using a CBF approach \cite{kolathaya2018input,jagtap2020control}. In our work, we restrict safety guarantees to be state-space invariant.

To the best of our knowledge, this work is the first to design a controller that guarantees incremental stabilization ($\delta$-ISpS) and safety simultaneously for a partially unknown system on a compact state-space. It also provides the first definition and Lyapunov-based characterization for $\delta$-ISpS enforcing safety. We address the synthesis problem by first learning the unknown dynamics using a GP \cite{HircheGP} with probabilistic accuracy guarantees. We then provide a feedback controller design and a corresponding $\delta$-ISpS Lyapunov function that renders the system $\delta$-ISpS. Finally, we demonstrate the trajectory convergence of the synthesized controller in a case study where the dynamics is partially known.

\textbf{Notations:}
The sets of real, positive real, non-negative real, and positive integers are $\R$, $\R^+$, $\R^+_0$, and $\N$, respectively. $\R^n$ denotes an $n$-dimensional Euclidean space, and $\R^{n\times m}$ is the space of $n\times m$ real-valued matrices. For $x\in \mathbb{R}^n$, $\lVert x \rVert$ is the Euclidean norm. For a measurable function $\upsilon:\R^+_0\rightarrow\R^n$, $\lVert\upsilon\rVert_{\infty}:=(ess)\sup\left\{\lVert \upsilon(t)\rVert,t\geq 0\right\}$ is the (essential) supremum. $\mathcal{N}(\mu,\rho)$ denotes the multivariate normal distribution, with mean $\mu\in\R^n$ and covariance $\rho \in\R^{n\times n}$. The reproducing kernel Hilbert space (RKHS) is a Hilbert space of square-integrable functions. The RKHS norm is denoted by $\lVert f\rVert_{k}$, where $f$ is a function, $k:X\times X\rightarrow\R^+_0$ is a kernel (symmetric positive definite), and $X\subset\R^n$. The RKHS includes functions of the form $f(x)=\sum_i a_ik(x,x_i)$ ($a_i\in\R$, $x,x_i\in X$). (See \cite{InfoGain} for details). A continuous function $\alpha:\R^+_0\rightarrow\R^+_0$ is class-$\mathcal{K}$ if it is strictly increasing and $\alpha(0)=0$. If $\alpha\in\mathcal{K}$ and is radially unbounded ($\alpha(r)\rightarrow\infty$ as $r\rightarrow\infty$), it is a class-$\mathcal{K}_{\infty}$ function. A continuous function $\beta:\R^+_0\times\R^+_0\rightarrow\R^+_0$ belongs to class-$\mathcal{KL}$ if, for fixed $s$, $\beta(\cdot,s)\in\mathcal{K}_{\infty}$, and for fixed $r$, $\beta(r,\cdot)$ is decreasing, with $\beta(r,s)\rightarrow 0$ as $s\rightarrow\infty$. $\mathcal{I}_d$ represents the identity function.
\section{Incremental Input to State practical Stability}\label{section:prelim}
We consider a continuous-time control system $\Sigma$ represented as
\begin{align}
    \Sigma: \dot x=f(x,u),
\end{align}
where $x(t)\in X \subset \R^d$ represents the state of the system in the state-space $X$ which is compact, $u(t) \in U \subseteq \mathbb{R}^m$ represents the input to the system from the input space $U$. The map $f:X\times U\rightarrow \mathbb{R}^d$ is assumed to satisfy the local Lipschitz continuity assumption to ensure the existence and uniqueness of trajectories \cite{Khalil:1173048}. We now consider the application of a feedback controller $p:X \times W \rightarrow U$, resulting in a closed loop continuous time control system, \begin{gather}\label{eqn:sys_del_isps}
       \Sigma: \dot x=f(x,p(x,\upsilon)),
    \end{gather}
where $\upsilon(t) \in W \subset \mathbb{R}^m$ represents the external input from the external input set $W$.  We represent the trajectory of the closed loop system by $x_{a\upsilon}:\R^+_0 \rightarrow X$, under the input signal $\upsilon:\mathbb{R}^+_0 \rightarrow W$ starting from the initial condition $x_{a \upsilon}(0):=a$. 

Now, we introduce the notion of Incremental Input-to-State practical Stability ($\delta$-ISpS), inspired by \cite{ISpS}, and its characterization using a $\delta$-ISpS Lyapunov Function.

\begin{Definition}[$\delta$-ISpS]\label{def:deltaISpS}
A control system $\Sigma$ from \eqref{eqn:sys_del_isps} is \textit{incrementally input-to-state practically stable} if there exist functions $\beta\in\mathcal{KL}$, $\gamma\in\mathcal{K}_{\infty}$ and a constant $c>0$ such that for any $t\in\R^+_0$, any external inputs $\upsilon,\upsilon':\mathbb{R}^+_0 \rightarrow W$, and any initial states $a, a'\in X$, the inequality:
\begin{align}\label{eqn:deltaISpS}
 	\lVert x_{a\upsilon}(t)\hspace{-0.25em}-\hspace{-0.25em}x_{a'\upsilon'}(t)\rVert \hspace{-0.25em}\leq\hspace{-0.25em} \beta(\lVert a\hspace{-0.25em}-\hspace{-0.25em}a'\rVert,t)\hspace{-0.25em} +\hspace{-0.25em} \gamma(\lVert \upsilon\hspace{-0.25em}-\hspace{-0.25em}\upsilon'\rVert_{\infty}) \hspace{-0.25em}+\hspace{-0.25em} c.
\end{align}
\end{Definition}
holds true. Note that if $c=0$, the system is \textit{incrementally input-to-state stable} \cite{angeli}. In this paper, we aim to present a data-driven approach to tackle the problem of incremental stability. This requires working with compact sets, so we first introduce the notion of $\delta$-ISpS-CLF for compact sets. To do this, we introduce the notion of forward invariance.
\begin{Definition}[Robustly Forward Invariant Set \cite{liu2019compositional}]
    A set $\mathcal{C}$ is said to be robustly forward invariant with respect to the system \eqref{eqn:sys_del_isps} if for every $a \in \mathcal{C}$ and for all $\upsilon:\mathbb{R}^+_0 \rightarrow W$, there exists some control input $u(t):=p(x_{a\upsilon}(t),\upsilon(t)) \in U$ such that $x_{a\upsilon}(t) \in \mathcal{C}$, for all $t \geq 0$. 
    The controller $p$, which makes the system robustly forward invariant, is called a forward invariant controller.
\end{Definition}
Now, we introduce the notion of $\delta$-ISS-CLF for the closed-loop system given in \eqref{eqn:sys_del_isps} where the sets $X\subset \mathbb{R}^n$ and $W \subset \mathbb{R}^m$ are compact and $X$ is considered to be robustly forward invariant under the controller $p$. 
\begin{Definition}[$\delta$-ISpS Control Lyapunov Function]\label{def:deltaISpSLyapunov}
A differentiable function $V:X \times X\rightarrow\R^+_0$ is a $\delta$-ISpS-CLF for the closed-loop system \eqref{eqn:sys_del_isps}, if there exist a forward invariant controller $p:X \times W \rightarrow U$, functions $\overline{\alpha},\underline{\alpha},\sigma\in\mathcal{K}_{\infty}$, and constants $\Tilde{c},\kappa \in\R^+$, such that:
\begin{enumerate}[(i)]
 	\item $\forall x,x'\in X$, $\underline{\alpha}(\lVert x-x'\rVert)\leq V(x,x')\leq\overline{\alpha}(\lVert x-x'\rVert)$;\label{item:lyapunov1}
 	\item $\forall x,x'\in X$ and $\forall \upsilon,\upsilon':\mathbb{R}^+_0 \rightarrow W$,
        $\dot V(x,x')\leq-\kappa V(x,x')+\sigma(\lVert \upsilon-\upsilon'\rVert_{\infty})+\Tilde{c}.$\label{item:lyapunov2}
\end{enumerate}
\end{Definition}
The following theorem describes $\delta$-ISpS property of a system in terms of the existence of a $\delta$-ISpS-CLF. 
\begin{theorem}\label{theorem:ISpSLyapunov}
    The closed-loop system $\Sigma$ is $\delta$-ISpS within the state space $X$ with respect to the external input $\upsilon$, if it admits a  $\delta$-ISpS-CLF as defined in Definition \ref{def:deltaISpSLyapunov}.
\end{theorem}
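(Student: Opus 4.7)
The plan is to prove that existence of a $\delta$-ISpS-CLF implies the $\delta$-ISpS bound \eqref{eqn:deltaISpS} by a standard comparison-lemma argument applied to the Lyapunov decrease condition (ii), followed by inverting the lower bound in (i) using subadditivity of $\mathcal{K}_\infty$ inverses. Robust forward invariance of $X$ under the controller $p$ guarantees that both trajectories $x_{a\upsilon}(t)$ and $x_{a'\upsilon'}(t)$ remain in $X$ for all $t \geq 0$ and for any admissible $a,a'\in X$, so that conditions (i)–(ii) can be evaluated along the coupled trajectory pair for all time.

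First I would fix arbitrary $a,a'\in X$ and measurable external inputs $\upsilon,\upsilon':\mathbb{R}^+_0\to W$, define $V(t):=V(x_{a\upsilon}(t),x_{a'\upsilon'}(t))$, and set $\omega := \sigma(\lVert\upsilon-\upsilon'\rVert_{\infty}) + \tilde c$. Then (ii) gives the scalar differential inequality $\dot V(t) \leq -\kappa V(t) + \omega$. By the comparison lemma (or direct integration of $\tfrac{d}{dt}(e^{\kappa t}V(t))\leq e^{\kappa t}\omega$) one obtains
\begin{equation*}
V(t)\;\leq\; e^{-\kappa t}V(0) + \frac{\omega}{\kappa}\bigl(1-e^{-\kappa t}\bigr)\;\leq\; e^{-\kappa t}\,\overline{\alpha}(\lVert a-a'\rVert) + \frac{\sigma(\lVert\upsilon-\upsilon'\rVert_{\infty})}{\kappa} + \frac{\tilde c}{\kappa},
\end{equation*}
using the upper bound from (i) on $V(0)$ and dropping the $(1-e^{-\kappa t})\leq 1$ factor on the positive terms.

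Next I would invert the lower bound from (i): since $\underline{\alpha}\in\mathcal{K}_\infty$, its inverse $\underline{\alpha}^{-1}$ is also in $\mathcal{K}_\infty$, and in particular it satisfies the well-known subadditive estimate $\underline{\alpha}^{-1}(r_1+r_2+r_3)\leq \underline{\alpha}^{-1}(3r_1)+\underline{\alpha}^{-1}(3r_2)+\underline{\alpha}^{-1}(3r_3)$ for all $r_i\geq 0$. Applying this with the three terms above yields
\begin{equation*}
\lVert x_{a\upsilon}(t)-x_{a'\upsilon'}(t)\rVert \;\leq\; \underline{\alpha}^{-1}\!\bigl(3e^{-\kappa t}\,\overline{\alpha}(\lVert a-a'\rVert)\bigr) + \underline{\alpha}^{-1}\!\Bigl(\tfrac{3}{\kappa}\sigma(\lVert\upsilon-\upsilon'\rVert_{\infty})\Bigr) + \underline{\alpha}^{-1}\!\Bigl(\tfrac{3\tilde c}{\kappa}\Bigr).
\end{equation*}
I would then define $\beta(r,t):=\underline{\alpha}^{-1}(3e^{-\kappa t}\overline{\alpha}(r))$, $\gamma(r):=\underline{\alpha}^{-1}(\tfrac{3}{\kappa}\sigma(r))$, and $c:=\underline{\alpha}^{-1}(\tfrac{3\tilde c}{\kappa})$, verifying that $\beta\in\mathcal{KL}$ (product of a $\mathcal{K}_\infty$ function of $r$ and a decaying exponential in $t$, composed with $\underline{\alpha}^{-1}\in\mathcal{K}_\infty$), $\gamma\in\mathcal{K}_\infty$, and $c>0$ since $\tilde c>0$. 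This exactly yields \eqref{eqn:deltaISpS}.

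The main obstacle is mostly bookkeeping rather than a deep step: I must be careful that $V$ is evaluated along a pair of trajectories both of which stay in $X$ (which the robust forward invariance assumption delivers) so that the Lyapunov inequalities in (i)–(ii) are in fact applicable pointwise in $t$, and that the $\mathcal{K}_\infty$ subadditivity is justified cleanly. A minor subtlety is that $\dot V$ along the coupled system is a total derivative depending on both $f(x,p(x,\upsilon))$ and $f(x',p(x',\upsilon'))$, so I would state (ii) as an inequality on this total derivative along trajectories, making the comparison-lemma step unambiguous.
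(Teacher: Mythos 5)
Your proposal is correct and follows essentially the same route as the paper's proof: the comparison lemma applied to condition (ii), the upper bound $\overline{\alpha}$ on $V(0)$, the weak-triangle-inequality estimate $\underline{\alpha}^{-1}(r_1+r_2+r_3)\leq\sum_i\underline{\alpha}^{-1}(3r_i)$ (the paper cites \cite[Exercise-4.35]{Khalil:1173048} for this), and identical definitions of $\beta$, $\gamma$, and $c$. Your added remarks on forward invariance of $X$ and on interpreting $\dot V$ as the total derivative along the coupled trajectory pair are sound clarifications of points the paper leaves implicit, but they do not change the argument.
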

\begin{proof}
    The proof is inspired by the proof of \cite[Theorem 2.6]{deltaISS}
    Consider a Lyapunov function satisfying the conditions \eqref{item:lyapunov1}-\eqref{item:lyapunov2} in Definition \ref{def:deltaISpSLyapunov}. 
    By using property \eqref{item:lyapunov1} in Definition \ref{def:deltaISpSLyapunov}, we obtain,
    \begin{align}
        \lVert x_{a\upsilon}(t)-x_{a'\upsilon'}(t)\rVert \leq \underline{\alpha}^{-1}(V(x_{a\upsilon}(t),x_{a'\upsilon'}(t))). \label{d_V_relation}
    \end{align}
    For any $t\in\R^+_0$, $\upsilon,\upsilon':\mathbb{R}_0^+ \rightarrow W$ and for all $x_{a\upsilon}(0):=a \in X,x_{a'\upsilon'}(0):=a'\in X$,
    \begin{align*}
        \dot V(x_{a\upsilon}(t),x_{a'\upsilon'}(t))\hspace{-0.2em}\leq\hspace{-0.5em}-\kappa V(x_{a\upsilon}(t),x_{a'\upsilon'}(t))\hspace{-0.3em}+\hspace{-0.3em}\sigma(\lVert \upsilon\hspace{-0.3em}-\hspace{-0.3em}\upsilon'\rVert_{\infty})\hspace{-0.3em}+\hspace{-0.3em}\Tilde{c}.
    \end{align*}
    By applying comparison lemma and by \eqref{d_V_relation} we get,
    \begin{align*}
        &V(x_{a\upsilon}(t),x_{a'\upsilon'}(t)) \leq e^{-\kappa t} V(a,a')\hspace{-0.3em}+\hspace{-0.3em}\frac{1}{\kappa}(\sigma(\lVert \upsilon\hspace{-0.3em}-\hspace{-0.3em}\upsilon'\rVert_{\infty})\hspace{-0.3em} +\hspace{-0.3em}\Tilde{c}),\\
        &\lVert x_{a\upsilon}(t)-x_{a'\upsilon'}(t)\rVert\hspace{-0.3em} \leq \hspace{-0.2em} \underline{\alpha}^{-1}\bigl(e^{-\kappa t} V(a,a')+\frac{1}{\kappa}(\sigma(\lVert \upsilon\hspace{-0.3em}-\hspace{-0.3em}\upsilon'\rVert_{\infty}) \hspace{-0.3em}+\hspace{-0.3em}\Tilde{c})\bigr).
    \end{align*}
By using a result similar to \cite[Exercise-4.35]{Khalil:1173048}, we can write
\begin{align*}
    &\lVert x_{a\upsilon}(t)-x_{a'\upsilon'}(t) \rVert\hspace{-0.3em} \leq \hspace{-0.3em}\underline{\alpha}^{-1}\bigl(3e^{-\kappa t} V(a,a')\bigr)+\underline{\alpha}^{-1}\bigl(\frac{3\Tilde{c}}{\kappa}\bigr)\\ & +\underline{\alpha}^{-1}\bigl(\frac{3}{\kappa}(\sigma(\lVert \upsilon-\upsilon'\rVert_{\infty})\bigr) \leq\beta(\lVert a\hspace{-0.25em}-\hspace{-0.25em}a'\rVert,t)\hspace{-0.25em}+\hspace{-0.25em}\gamma(\lVert \upsilon\hspace{-0.25em}-\hspace{-0.25em}\upsilon'\rVert_{\infty})\hspace{-0.25em}+\hspace{-0.25em}c,
\end{align*}
where $\beta(\lVert a-a'\rVert,t):=\underline{\alpha}^{-1}\bigl(3e^{-\kappa t}\overline{\alpha}(\lVert a-a'\rVert)\bigr)$, $\gamma(\lVert \upsilon-\upsilon'\rVert_{\infty}):=\underline{\alpha}^{-1}\bigl(\frac{3}{\kappa}(\sigma(\lVert \upsilon-\upsilon'\rVert_{\infty} )\bigr)$, $c:=\underline{\alpha}^{-1}\bigl(\frac{3\Tilde{c}}{\kappa}\bigr)$.
This implies that system $\Sigma$ is $\delta$-ISpS as in Definition \ref{def:deltaISpS}.
\end{proof}
\section{System Description and Preliminaries}
In this work, we aim to design a feedback control scheme {(i.e.)} $u:=p(x,\upsilon)$ for the system \eqref{eqn:sys_del_isps}, that enforces $\delta$-ISpS properties for a class of partially unknown nonlinear control systems enforcing invariance on a compact set $X$.

\subsection{Nonlinear System in Strict Feedback Form}
We consider a nonlinear system in strict feedback form, which is a class of control systems with $f$ given as:
\begin{align}
    \dot{x}_1&=x_2,\nonumber\\
    \dot{x}_2&=f(x)+g(x)u,\label{eqn:ELsys}
\end{align}
where $x(t)=\left[x_1(t)^{\top},x_2(t)^{\top}\right]^{\top}\in X \subset \R^d$ is the state of the system, $X=X_1 \times X_2$, $x_1(t) \in X_1\subset\R^n$, $x_2(t) \in X_2\subset\R^n$, $u:\mathbb{R}^+_0\rightarrow U$ is the input signal. We use $f_i$ to represent the $i^{th}$ component of the vector function $f$, where $i\in I:=\left\{1,\ldots,{n}\right\}$. For brevity, whenever we use the subscript $i$, we refer to the whole set $I$ unless stated otherwise. 
Furthermore, to simplify the presentation when the context is unambiguous, the specific state $x(t)$ will be represented concisely as $x$. Now, we make the following assumptions on the system (\ref{eqn:ELsys}). 
\begin{assumption}\label{assume:dynamics}
    In system (\ref{eqn:ELsys}), we assume that  $f$ is unknown and the function $g$ (i.e., $g(x)$) is known. The matrix $g(x)^{-1}$ is assumed to exist for all $x \in X$.
\end{assumption}
This assumption is mild and broadly applicable, as it is naturally satisfied by many systems, including Euler–Lagrange systems. In order to deal with unknown parts of the dynamics, we utilize GP approximation (discussed in Subsection \ref{section:GP}), for which we need the following assumptions.
\begin{assumption}[\cite{HircheGP}]\label{assume:RKHS}
 The unknown function $f$ has a bounded reproducing kernel Hilbert space (RKHS) norm with respect to a known kernel $k$, that is $\lVert f_i\rVert_{k}<\infty$.
 \end{assumption}
 
 An RKHS for the kernels used is dense in the space of continuous functions restricted to a compact set $X$, which allows the kernels to approximate any continuous function on $X$. In order to train the model, we also need the following assumption on the availability of the dataset. 
\begin{assumption}[\cite{HircheGP}]\label{assume:measurements}
    The measurements $x\in X$ and $y=f(x)+w$ are accessible, where $w\sim \mathcal{N}(0_n,\rho_f^2\mathbf{I}_n)$ is an additive noise.
\end{assumption}
Practically, the measurements $f(x)$ can be approximated using state measurements obtained by running the system for a very small sampling time from different initial conditions with input signal $u \equiv0$. The noise $w$ is used to accommodate the approximations.
\vspace{-0.1cm}
\subsection{Gaussian Process Models}\label{section:GP}
Gaussian process can be used to approximately learn unknown nonlinear dynamics $f:X\rightarrow\R^{{n}}$ by using potentially noisy measurements, while also indicating model fidelity based on distance to training data. A GP is a stochastic process that assigns a joint Gaussian distribution to a finite subset $\{x^{(1)},\ldots,x^{(N)}\}\subset X$ \cite{GPBook}. Since $f$ is $n$-dimensional and GP only gives scalar values, each component $f_{i}$ is approximated with a GP:
\begin{align}
    \hat{f}_{i}(x)\sim\mathcal{GP}(\mu_{i}(x),k_{i}(x,x')),
\end{align}where $\mu_{i}:X \rightarrow \R$ is a mean function and $k_{i}:X\times X \rightarrow \R_0^+$ is a kernel that measures similarity between any $x,x'\in X$. For the prior mean function, any real-valued function can be used, but it is common practice to set $\mu_{i}(x)=0$ for all $x\in X$. On the contrary, the kernel function is problem-dependent. Some of the most commonly used kernels are linear, squared-exponential, and Mat\`ern kernels \cite{GPBook}. The approximation of $f$ is given by,
\begin{align}
    \hat{f}(x)=\begin{cases}
        \hat{f}_{1}(x)\sim\mathcal{GP}(0,k_{1}(x,x')),\\
        \hspace{1.5cm}\vdots\\
        \hat{f}_{{n}}(x)\sim\mathcal{GP}(0,k_{n}(x,x')).
    \end{cases}
\end{align}
Given a dataset $\mathcal{D}=\{(x^{(j)},y^{(j)})\}_{j=1}^N$, where $y^{(j)}=f(x^{(j)})+\omega^{(j)},\forall j\in\{1,\ldots,N\}$ following Assumption \ref{assume:measurements} and $N\in\N$, for an arbitrary state $x\in X$, the posterior distribution corresponding to $f_{i}(x)$ is computed as a normal distribution with mean $\mu_{i}(x)$ and covariance $\rho^2_{i}(x)$ as:
\begin{align}
    &\mu_{i}(x)=\Bar{k}_{i}^{\top}(K_{i}+\rho_f^2\mathbf{I}_N)^{-1}y_{i},\label{eqn:mean_i}\\
    &\rho^2_{i}(x)=k_{i}(x,x)-\Bar{k}_{i}^{\top}(K_{i}+\rho^2_f\mathbf{I}_N)^{-1}\Bar{k}_{i},\label{eqn:variance_i}
\end{align}where $\Bar{k}_{i}=[k_{i}(x^{(1)},x),\ldots,k_{i}(x^{(N)},x)]^{\top}\in\R^N$, $y_{i}=[y_{i}^{(1)},\ldots,y_{i}^{(N)}]^{\top}\in\R^N$ and $$K_{i}=\begin{bmatrix}k_{i}(x^{(1)},x^{(1)}) & \ldots & k_{i}(x^{(1)},x^{(N)})\\\vdots & \ddots & \vdots\\   k_{i}(x^{(N)},x^{(1)}) & \ldots & k_{i}(x^{(N)},x^{(N)})
\end{bmatrix}\in\R^{N\times N}.$$ Due to the continuity of the kernels, there exists a bound $\Bar{\rho}^2_{i}=\max_{x\in X}\rho^2_{i}(x)$. The overall function $f(x)$ is approximated by the distribution $\mathcal{N}(\mu(x),\rho(x))$, where
\begin{align}
    \label{eqn:mean}\mu(x):=[\mu_{1}(x),\ldots,\mu_{n}(x)]^{\top},\\
    \label{eqn:variance}\rho^2(x):=[\rho^2_{1}(x),\ldots,\rho^2_{n}(x)]^{\top}.
\end{align}
Using Assumption \ref{assume:RKHS}, the difference between the unknown $f(x)$ and the inferred mean $\mu(x)$ can be upper-bounded with high probability.
\begin{proposition}\label{prop:pobabilisticbound}
    Consider system (\ref{eqn:ELsys}), with assumptions \ref{assume:dynamics}-\ref{assume:measurements} and the learned GP model with mean $\mu$ and standard deviation $\rho$ as defined in (\ref{eqn:mean}) and (\ref{eqn:variance}), respectively, and $\Bar{\rho}^2_{i}=\max_{x\in X}\rho^2_{i}(x)$. The model error is bounded by
    \begin{align}
    \label{eqn:prob_for_gp}
        \mathbb{P}\left\{\lVert f(x)-\mu(x)\rVert\leq\lVert\eta\rVert\lVert\Bar{\rho}\rVert,\forall x\in X\right\}\geq (1-\epsilon)^n,
    \end{align}
    where $\epsilon\in(0,1)$, $\eta=\left[\eta_1,\ldots\eta_n\right]^{\top}$, $\eta_i=\sqrt{2\lVert f_i\rVert^2_{k_i}+300\gamma_i\log^3\left(\frac{N+1}{\epsilon}\right)}$, $N$ is the number of samples in $\mathcal{D}$, and $\gamma_i$ is the maximum information gain.
\end{proposition}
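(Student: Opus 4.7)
The plan is to reduce the multi-dimensional statement to the well-studied scalar case, apply a known RKHS-based concentration bound component-wise, and then lift it to a uniform vector-norm inequality via independence of the $n$ independently trained GPs. No new machinery is needed; the work is in assembling the pieces.

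First, I would invoke the standard scalar concentration result from \cite{HircheGP} (which itself stems from Srinivas et al.\ / Chowdhury--Gopalan self-normalized martingale arguments in RKHS): under Assumptions \ref{assume:RKHS}--\ref{assume:measurements}, each scalar GP posterior $(\mu_i,\rho_i)$ satisfies
$$\mathbb{P}\bigl\{|f_i(x)-\mu_i(x)|\leq \eta_i \rho_i(x),\ \forall x\in X\bigr\}\geq 1-\epsilon,$$
where $\eta_i=\sqrt{2\lVert f_i\rVert_{k_i}^2+300\gamma_i\log^3((N+1)/\epsilon)}$ is precisely the constant appearing in the statement. The dependence on $\lVert f_i\rVert_{k_i}$, the noise level (absorbed into $\gamma_i$), and the maximum information gain $\gamma_i$ all come from that result; I would cite rather than reprove it.

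Next, I would tighten this to a constant bound using the uniform posterior-variance cap $\bar\rho_i^{\,2}=\max_{x\in X}\rho_i^{\,2}(x)$, which exists because $k_i$ is continuous and $X$ is compact. This turns the per-component event into ``$|f_i(x)-\mu_i(x)|\leq \eta_i\bar\rho_i$ for all $x\in X$.'' Because each component $f_i$ is learned on an independent noise channel $w_i\sim\mathcal{N}(0,\rho_f^2)$ with its own GP, the $n$ per-component events are mutually independent, so their intersection holds with probability at least $(1-\epsilon)^n$.

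Finally, on that joint event I would combine the component bounds into the Euclidean norm by
\begin{align*}
\lVert f(x)-\mu(x)\rVert^2=\sum_{i=1}^n (f_i(x)-\mu_i(x))^2 \leq \sum_{i=1}^n \eta_i^{\,2}\bar\rho_i^{\,2} \leq \Bigl(\sum_{i=1}^n\eta_i^{\,2}\Bigr)\Bigl(\sum_{j=1}^n\bar\rho_j^{\,2}\Bigr) = \lVert\eta\rVert^2\lVert\bar\rho\rVert^2,
\end{align*}
where the penultimate step uses the crude bound $\bar\rho_i^{\,2}\leq \lVert\bar\rho\rVert^2$ applied under each summand. Taking square roots yields $\lVert f(x)-\mu(x)\rVert\leq \lVert\eta\rVert\lVert\bar\rho\rVert$ uniformly in $x$. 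The main obstacle is really just the first step, namely carefully matching the constants of the cited scalar concentration inequality to the form of $\eta_i$ stated here; once that identification is made, the rest is a short sequence of inequalities together with the independence argument that produces the $(1-\epsilon)^n$ factor.
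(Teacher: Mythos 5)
Your proposal is correct and takes essentially the same route as the paper: the paper's proof simply invokes \cite[Lemma 2]{HircheGP} for the vector-valued bound $\lVert f(x)-\mu(x)\rVert\leq\lVert\eta\rVert\lVert\rho(x)\rVert$ holding with probability $(1-\epsilon)^n$ and then, as in \cite{JagtapGP}, replaces $\rho(x)$ by the uniform cap $\bar{\rho}$, whereas you unpack that cited lemma into its ingredients (the scalar RKHS concentration bound, independence of the $n$ noise channels giving the $(1-\epsilon)^n$ factor, and the norm-assembly inequality $\sum_i\eta_i^2\bar{\rho}_i^2\leq\lVert\eta\rVert^2\lVert\bar{\rho}\rVert^2$). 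All of these steps are valid and match the content of the result the paper cites, so the two proofs coincide in substance.
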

    \begin{proof}
        The proof follows from \cite[Lemma 2]{HircheGP}, where $\mathbb{P}\left\{\lVert f(x)-\mu(x)\rVert\leq\lVert\eta\rVert\lVert\rho(x)\rVert,\forall x\in X\right\}\geq (1-\epsilon)^n$. Similar to \cite{JagtapGP}, we can rewrite it as $\mathbb{P}\left\{\lVert f(x)-\mu(x)\rVert\in\{d|d\in[0,\lVert\eta\rVert\lVert\Bar{\rho}\rVert]\},\forall x \in X)\right\}\geq (1-\epsilon)^n$, where $\Bar{\rho}=[\Bar{\rho}_1,\ldots,\Bar{\rho}_n]^{\top}$ and $\Bar{\rho}_i:=\max_{x\in X}\rho_i(x)$. Hence, it is obvious that $\mathbb{P}(\lVert f(x)-\mu(x)\rVert\leq\lVert\eta\rVert\lVert\Bar{\rho}\rVert,\forall x\in X)\geq (1-\epsilon)^n$. For further discussion, refer to \cite{InfoGain}.
    \end{proof}
Leveraging the learned dynamics and the associated model error bounds, we proceed to design a controller using a feedback linearization framework, which is detailed in Section \ref{section:main theorem}. And this controller must ensure state-space invariance as we work with a compact state set $X$. To work with invariance, we introduce control barrier functions.
\subsection{Control Barrier Functions}
In this section, we formally introduce control barrier functions (CBFs) and their applications in the context of state-space invariance. Given the nonlinear control system $\Sigma$ in control affine form:
\begin{equation}
\label{general system equation}
    \Dot{x}=\Bar{f}(x)+\Bar{g}(x)\Bar{u},
\end{equation}
where, $\Bar{f}(x)=\begin{bmatrix}
    x_2\\
    f(x)
\end{bmatrix}$, $\Bar{g}(x)=\begin{bmatrix}
    0_{n\times n} & 0_{n \times m}\\
    0_{n\times n} & g(x)
\end{bmatrix}$, $\Bar{u}=\begin{bmatrix}
    0_n\\
    u
\end{bmatrix}\in \R^{(n+m)}$, $x\in {X} \subset \mathbb{R}^d$ is the state of system, and $u \in \mathbb{R}^m$. Assume that the functions $\Bar{f}:\mathbb{R}^d \rightarrow \mathbb{R}^d$ and $\Bar{g}:\mathbb{R}^d \rightarrow \mathbb{R}^d \times \mathbb{R}^{(n+m)}$ are continuously differentiable. Notice that $\Bar{f}$, $\Bar{g}$, and $\Bar{u}$ are used to differentiate from the notations of $f$, $g$ and $u$ used in \eqref{eqn:ELsys}. Consider a set $\mathcal{C}$ defined as the super-level set of a continuously differentiable function $h:{X} \rightarrow \mathbb{R}$ yielding,
\begin{equation}
\label{level sets equation}
    \left.
    \begin{aligned}
        \mathcal{C}&=\{x \in {X} \subset \mathbb{R}^d  : h(x) \geq 0 \}\\
        \partial \mathcal{C}&=\{x\in {X} \subset \mathbb{R}^d  : h(x) = 0 \}\\
        \text{Int}(\mathcal{C})&=\{x \in {X} \subset \mathbb{R}^d  : h(x) > 0 \}
    \end{aligned} \quad
    \right\}.
\end{equation}
It is assumed that $\text{Int}(\mathcal{C})$ is non-empty and $\mathcal{C}$ has no isolated points, \emph{(i.e.,)} $\text{Int}(\mathcal{C}) \neq \phi$ and $\overline{\text{Int}(\mathcal{C})} = \mathcal{C}$. Given a Lipschitz continuous control law $\Bar{u}=k(x)$, the resulting closed-loop system dynamics are $\Dot{x}=\Bar{f}_{cl}(x)=\Bar{f}(x)+\Bar{g}(x)k(x)$. The solution of the system is safe with respect to the control law $\Bar{u}=k(x)$ if the set $\mathcal{C}$ is forward invariant \textit{(i.e.)}
$\forall x(0) \in \mathcal{C} \implies x(t) \in \mathcal{C} \quad \text{for all time } t \geq 0.$ We can mathematically verify if the controller $k(x)$ is ensuring safety or not, by using control barrier functions (CBFs), which is defined next. 
\begin{Definition}[\cite{jagtap2020control}]
    \emph{(Control barrier function (CBF)):} Given the
set $\mathcal{C}$ defined by \eqref{level sets equation}, with $\frac{\partial h(x)}{\partial x} \neq 0$, for all $x \in \partial \mathcal{C}$, the function $h:X \subset \mathbb{R}^d \rightarrow \mathbb{R}$ is called the control barrier function (CBF), if there exists an extended class $\mathcal{K}$ function $\alpha$ such that for all $x \in X$, there exists input $\bar{u} \in U$, such that: $\dot{h}(x,\bar{u})+\alpha(h(x)) \geq 0$, 
where $\dot{h}(x,\bar{u})=\mathcal{L}_{\Bar{f}}h(x)+\mathcal{L}_{\Bar{g}}h(x)\bar{u}$, 
$\mathcal{L}_{\Bar{f}}h(x)=\frac{\partial h}{\partial x}\bar{f}(x)$, $\mathcal{L}_{\Bar{g}}h(x)=\frac{\partial h}{\partial x}\Bar{g}(x)$ are the Lie derivatives.
\end{Definition}
In our case, since the system dynamics is unknown, we use a variant of the control barrier function as given in \cite{jagtap2020control} to prove that the controller designed makes the system state-space invariant. In the next section, we present the controller that will make the system incrementally input to state practically stable, along with state-space invariance.
\section{Incremental Stability with Safety}
\label{section:main theorem}
In this section, we present the main result of the paper on the feedback linearization control design scheme, providing controllers that render the system $\delta$-ISpS, along with the system being state-space invariant.
\begin{assumption}
\label{assume:cbf}
We assume the existence of a continuously differentiable function $h : X \subset\mathbb{R}^d \rightarrow \mathbb{R}$, referred to as a control barrier function (CBF), such that its zero superlevel set characterizes the admissible state space. The true system dynamics $f(x)$ (as given in \eqref{eqn:ELsys}) is approximated by a GP, which yields a mean function $\mu(x)$ and standard deviation $\rho(x)$, trained over an extended domain $\mathcal{C}_d \supset X$ that slightly enlarges the original state space $X$ to account for model uncertainty caused by GP. Specifically, we define $\mathcal{C}_d:= \left\{ x \in \mathbb{R}^d \;\middle|\; h(x) \geq -\chi\left( \lVert d \rVert_{\infty} \right) \right\}$, $\partial \mathcal{C}_d:= \left\{ x \in \mathbb{R}^d \;\middle|\; h(x) = -\chi\left( \lVert d \rVert_{\infty} \right) \right\}$, $\mathrm{Int}(\mathcal{C}_d):= \left\{ x \in \mathbb{R}^d \;\middle|\; h(x) > -\chi\left( \lVert d \rVert_{\infty} \right) \right\}$, where $\chi \in \mathcal{K}_{[0,a)}$ is a class-$\mathcal{K}$ function and $\lVert d \rVert_{\infty}:=\lVert{\eta}\rVert \lVert \Bar{\rho}\rVert \lVert \nabla_{x_2}h(x) \rVert_{\infty}$, where $\lVert{\eta}\rVert \lVert \Bar{\rho}\rVert$ is given by \eqref{eqn:prob_for_gp} along with satisfying $\frac{\partial h(x)}{\partial x} \neq 0$, for all $x \in \partial \mathcal{C}$. 
\end{assumption}

\begin{theorem}\label{theorem:Control}
    Consider nonlinear system $\Sigma$ given by (\ref{eqn:ELsys}) satisfying Assumptions \ref{assume:dynamics}-\ref{assume:measurements} and Assumption \ref{assume:cbf}, the function $f(x)$ approximated by a trained $\text{GP}$ with mean $\mu(x)$, and standard deviation $\rho(x)$ given by \eqref{eqn:mean} and \eqref{eqn:variance}, with $\Bar{\rho}:=\max_{x\in X}\rho(x)$. The feedback control law given by
    \begin{align}\label{eqn:controlLaw}
&u = g(x)^{-1}\hspace{-0.2em} \bigg[\hspace{-0.3em} -\hspace{-0.3em}\mu(x)\hspace{-0.3em} -\lambda_1x_2 - \lambda_2x_2 -\lambda_1\lambda_2x_1+\hspace{-0.2em} \upsilon+\hspace{-0.2em} \frac{ \text{ReLU}\big(\hspace{-0.2em}-\phi_0(x) \hspace{-0.2em}-\hspace{-0.2em} \phi_1(x) \upsilon\big) \hspace{-0.1em}\cdot\hspace{-0.1em}\phi_1(x)^\top}{\phi_1(x) \phi_1(x)^\top}\hspace{-0.3em}  \bigg]
\end{align}
ensures that the system \eqref{eqn:ELsys} is $\delta$-ISpS with respect to the nominal input signal $\upsilon$ and simultaneously is invariant within the set $\mathcal{C}_d$ with probability at least $(1 - \epsilon)^n$, where $\lambda_1 > \tfrac{1}{2}, \quad \lambda_2 > \hspace{-0.2em}2+\frac{1}{2\theta},$ for some $\theta>0$, $\phi_0(x) = \nabla_{x_1} h(x)\, x_2 + \nabla_{x_2} h(x)\left(-\lambda_1x_1 - \lambda_2x_2 -\lambda_1\lambda_2x_1 \right) + \alpha(h(x))$, and $\phi_1(x) = \nabla_{x_2} h(x)$.
\end{theorem}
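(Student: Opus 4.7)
\textbf{Proof plan for Theorem~\ref{theorem:Control}.}

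The plan is to establish, on the high-probability event guaranteed by Proposition~\ref{prop:pobabilisticbound}, two separate claims: (a) robust forward invariance of $\mathcal{C}_d$, and (b) existence of a $\delta$-ISpS control Lyapunov function of the form prescribed by Definition~\ref{def:deltaISpSLyapunov}, so that Theorem~\ref{theorem:ISpSLyapunov} yields $\delta$-ISpS. Throughout, I work on the event $\{\lVert f(x)-\mu(x)\rVert\le \lVert\eta\rVert\lVert\bar\rho\rVert \text{ for all } x\in X\}$, which has probability at least $(1-\epsilon)^n$. The first move is algebraic: substituting the feedback law \eqref{eqn:controlLaw} into \eqref{eqn:ELsys} cancels $g(x)$ and replaces $f(x)$ by $\mu(x)+(f(x)-\mu(x))$, yielding the closed-loop second-state equation
\begin{equation*}
\dot x_2 = (f(x)-\mu(x)) -(\lambda_1+\lambda_2)x_2 -\lambda_1\lambda_2 x_1 + \upsilon + \frac{\mathrm{ReLU}(-\phi_0(x)-\phi_1(x)\upsilon)\,\phi_1(x)^{\top}}{\phi_1(x)\phi_1(x)^{\top}}.
\end{equation*}
From this expression, both safety and incremental-stability analyses will proceed in parallel.

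\textbf{Safety via the barrier correction.} Computing $\dot h$ along the closed loop and recognising $\phi_0$ and $\phi_1$ exactly as defined in the statement, I would show
\begin{equation*}
\dot h(x)+\alpha(h(x)) = \bigl[\phi_0(x)+\phi_1(x)\upsilon + \mathrm{ReLU}(-\phi_0(x)-\phi_1(x)\upsilon)\bigr] + \phi_1(x)(f(x)-\mu(x)),
\end{equation*}
because $\phi_1 \phi_1^{\top}/(\phi_1\phi_1^{\top})=1$. The bracketed term equals $\max(\phi_0+\phi_1\upsilon,0)\ge 0$, so $\dot h+\alpha(h)\ge -\lVert\nabla_{x_2}h(x)\rVert\,\lVert f(x)-\mu(x)\rVert \ge -\lVert d\rVert_{\infty}$ with the $\lVert d\rVert_\infty$ of Assumption~\ref{assume:cbf}. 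A standard input-to-state-safe CBF argument (the set-valued comparison lemma applied to $h$) then concludes that $\mathcal{C}_d = \{h\ge -\chi(\lVert d\rVert_{\infty})\}$ is forward invariant; choosing $\chi\in\mathcal{K}$ according to the extended-$\mathcal{K}$ function $\alpha$ completes this step.

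\textbf{Incremental stability via a triangularising change of coordinates.} For two trajectories with errors $z_1:=x_1-x_1'$, $z_2:=x_2-x_2'$, the linear feedback terms give $\dot z_1=z_2$ and a $\dot z_2$ whose deterministic part has characteristic roots $-\lambda_1,-\lambda_2$. I propose the coordinate shift $w_1:=z_1$, $w_2:=z_2+\lambda_1 z_1$, under which the unforced error system becomes upper-triangular: $\dot w_1=-\lambda_1 w_1+w_2$, $\dot w_2=-\lambda_2 w_2 + \Delta$, where
\begin{equation*}
\Delta := \bigl[(f(x)-\mu(x))-(f(x')-\mu(x'))\bigr] + (\upsilon-\upsilon') + \bigl[k_{\mathrm{CBF}}(x,\upsilon)-k_{\mathrm{CBF}}(x',\upsilon')\bigr],
\end{equation*}
and $k_{\mathrm{CBF}}$ denotes the ReLU correction. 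Taking $V(x,x'):=\tfrac12\lVert w_1\rVert^2+\tfrac12\lVert w_2\rVert^2$ manifestly sandwiches $\lVert x-x'\rVert$ between two class-$\mathcal{K}_{\infty}$ bounds via the smallest/largest eigenvalues of the quadratic form in $(z_1,z_2)$, establishing property~(\ref{item:lyapunov1}). Differentiating and applying Young's inequality once to the triangular cross-term $w_1^{\top}w_2$ (giving the constraint $\lambda_1>\tfrac12$) and once more with parameter $\theta$ to $w_2^{\top}(\upsilon-\upsilon')$ (contributing $\tfrac{1}{2\theta}\lVert w_2\rVert^2+\tfrac{\theta}{2}\lVert\upsilon-\upsilon'\rVert^2$), together with the trivial bounds $\lVert w_2\rVert\,\lVert\Delta_{\text{GP}}\rVert$ and $\lVert w_2\rVert\,\lVert\Delta_{\mathrm{CBF}}\rVert$ on the remaining summands, yields $\dot V\le -\kappa V + \sigma(\lVert\upsilon-\upsilon'\rVert_\infty)+\tilde c$, where the constants match $\lambda_1>\tfrac12$, $\lambda_2>2+\tfrac{1}{2\theta}$ as stated, $\sigma$ is a quadratic class-$\mathcal{K}_{\infty}$ function, and $\tilde c$ collects squared bounds on the GP error ($\le 2\lVert\eta\rVert\lVert\bar\rho\rVert$) and on the CBF correction difference. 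Property~(\ref{item:lyapunov2}) holds, so Theorem~\ref{theorem:ISpSLyapunov} delivers $\delta$-ISpS.

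\textbf{Expected main obstacle.} The delicate step is the $\Delta_{\mathrm{CBF}}$ contribution: the ReLU correction is only continuous (non-smooth at its kink) and cannot be bounded by a multiple of $\lVert x-x'\rVert$, so it cannot be absorbed into the decay rate $\kappa V$. The resolution is to exploit compactness of $X$ and $W$ together with Assumption~\ref{assume:cbf} (which makes $\phi_0,\phi_1$ continuous with $\phi_1\phi_1^{\top}$ uniformly bounded away from zero on $\partial\mathcal{C}_d$), so that $k_{\mathrm{CBF}}$ is uniformly bounded and its difference contributes a genuine constant to $\tilde c$ rather than a term vanishing with $\lVert x-x'\rVert$. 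This is precisely why a practical rather than strict incremental stability notion is needed, and why $\tilde c>0$ in Definition~\ref{def:deltaISpSLyapunov}. Combining the two pieces on the common $(1-\epsilon)^n$-probability event concludes the proof.
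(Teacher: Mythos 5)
Your proposal is correct and follows essentially the same route as the paper's proof: the same high-probability GP-error event from Proposition~\ref{prop:pobabilisticbound}, the same CBF argument for invariance of $\mathcal{C}_d$ via \cite[Theorem 2]{kolathaya2018input} (your identity $\phi_0+\phi_1\upsilon+\mathrm{ReLU}(-\phi_0-\phi_1\upsilon)=\max(\phi_0+\phi_1\upsilon,0)$ is just a compact form of the paper's two-case split), and the same Lyapunov function $V=\tfrac12\lVert e_1\rVert^2+\tfrac12\lVert e_2\rVert^2$ in the shifted coordinates $e_2=x_2+\lambda_1x_1-(y_2+\lambda_1y_1)$, with Young's-inequality bookkeeping yielding the identical constraints $\lambda_1>\tfrac12$ and $\lambda_2>2+\tfrac{1}{2\theta}$. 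The only cosmetic deviations---writing the error dynamics in triangular form and attaching the $\theta$-weighted Young step to the input difference rather than to the ReLU difference, which merely shuffles harmless constants between $\sigma$ and $\tilde c$---do not change the argument, and your closing observation that the bounded, non-Lipschitz ReLU correction must be absorbed into the constant $\tilde c$ is precisely how the paper arrives at practical (rather than strict) incremental stability.
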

\begin{proof}
As we consider a compact state-space $X$, we begin by establishing the robust state-space invariance of the proposed controller through the framework of CBF \cite{jagtap2020control}. Traditional CBF methods require explicit knowledge of the system dynamics. However, since the true dynamics $f(x)$ are unknown, we instead rely on their approximation using a GP, which provides a mean function $\mu(x)$ and a standard deviation $\rho(x)$, valid over the compact domain $X$. For notational simplicity, we denote $\phi_0(x)$ and $\phi_1(x)$ as $\phi_0$ and $\phi_1$ respectively in the following steps. To prove the controller ensures invariance in the set $\mathcal{C}_d$ we consider,
\begin{align*}
    &\dot{h}(x)+\alpha(h(x))=\nabla_{x_1}h(x)\Dot{x}_1+\nabla_{x_2}h(x)\Dot{x}_2+\alpha(h(x))\\
    &=\nabla_{x_1} h(x)x_2\hspace{-0.2em}+\hspace{-0.2em}\alpha(h(x))\hspace{-0.2em}+\hspace{-0.2em}\nabla_{x_2}h(x)(\mu(x)\hspace{-0.2em}+\lVert\eta\rVert\lVert\Bar\rho\rVert+g(x)\biggl(g(x)^{-1}\bigl(-\mu(x)-\lambda_1x_2\hspace{-0.25em}-\hspace{-0.25em}\lambda_2x_2\hspace{-0.25em}-\hspace{-0.25em}\lambda_1\lambda_2 x_1+\upsilon\\&+\frac{\text{ReLU}(-\phi_0\hspace{-0.25em}-\hspace{-0.25em}\phi_1\upsilon)\hspace{-0.15em}\phi_1^\top}{\phi_1\phi_1^\top}\bigr)\biggr)
    =\nabla_{x_1}h(x) x_2\hspace{-0.2em}\hspace{-0.2em}+\hspace{-0.2em}\alpha(h(x))\hspace{-0.2em}+\hspace{-0.2em}\nabla_{x_2}h(x)(-\lambda_2x_2\hspace{-0.25em}-\hspace{-0.25em}\lambda_1x_1\hspace{-0.25em}-\hspace{-0.25em}\lambda_1\lambda_2 x_1)+\hspace{-0.2em}\nabla_{x_2}h(x)\upsilon\\&+\hspace{-0.2em}\nabla_{x_2}h(x)\biggl(\hspace{-0.3em}\frac{\text{ReLU}(-\phi_0\hspace{-0.3em}-\hspace{-0.3em}\phi_1\upsilon)\hspace{-0.15em}\phi_1^\top}{\phi_1\phi_1^\top}\hspace{-0.3em}\biggr)\hspace{-0.2em}+\hspace{-0.2em}\chi(\lVert d\rVert_{\infty})\\
&=\phi_0\hspace{-0.3em}+\hspace{-0.3em}\phi_1\upsilon+\hspace{-0.2em}\nabla\hspace{-0.25em}_{x_2}h(x)\biggl(\hspace{-0.3em}\frac{\text{ReLU}(-\phi_0\hspace{-0.3em}-\hspace{-0.3em}\phi_1\upsilon)\phi_1^\top}{\phi_1\phi_1^\top}\hspace{-0.3em}\biggr)\hspace{-0.2em}+\hspace{-0.2em}\chi(\lVert d\rVert_{\infty})=\phi_0\hspace{-0.3em}+\hspace{-0.3em}\phi_1\upsilon+\hspace{-0.2em}\text{ReLU}(-\phi_0\hspace{-0.3em}-\hspace{-0.3em}\phi_1\upsilon)\biggl(\hspace{-0.3em}\frac{\phi_1\phi_1^\top}{\phi_1\phi_1\hspace{-0.25em}^\top}\hspace{-0.3em}\biggr)\hspace{-0.2em}+\hspace{-0.2em}\chi(\lVert d\rVert_{\infty})\\
&\geq \phi_0\hspace{-0.3em}+\hspace{-0.3em}\phi_1\upsilon\hspace{-0.3em}+\hspace{-0.2em}\text{ReLU}(-\phi_0-\phi_1\upsilon)\hspace{-0.2em}-\hspace{-0.2em}\chi(\lVert d\rVert_{\infty}),
\end{align*}
where $\chi:=\mathcal{I}_d \in \mathcal{K}$, $\lVert d\rVert_{\infty}:=\lVert{\eta}\rVert \lVert \Bar{\rho}\rVert \lVert \nabla_{x_2}h(x) \rVert_{\infty}$.
Here, one can have two cases. If $-\phi_0-\phi_1\upsilon\geq0$, then $\dot{h}(x) +\alpha(h(x)) \geq -\chi(d)$ and if $-\phi_0-\phi_1\upsilon<0$, then $\dot{h}(x)+\alpha(h(x))\geq\phi_0(x)+\phi_1(x)\upsilon-\chi(d)>-\chi(d)$. Thus, in both cases, $\dot{h}(x) + \alpha(h(x)) > -\chi(d)$,
which guarantees that the CBF condition is satisfied up to a bounded relaxation. Consequently, by invoking the results from \cite[Theorem 2]{kolathaya2018input}, one can conclude that the proposed control law ensures robust forward invariance of the set $\mathcal{C}_d := \left\{ x \in \mathbb{R}^d \; \middle| \; h(x) + \chi(\lVert d \rVert_{\infty}) \geq 0 \right\}.$

Next, we show that the control input given by \eqref{eqn:controlLaw} renders the system $\delta$-ISpS with respect to the external input $\upsilon$. From Theorem \ref{theorem:ISpSLyapunov}, it is sufficient to show that a Lyapunov function as defined in Definition \ref{def:deltaISpSLyapunov} exists for the closed-loop control system. Consider the $\delta-$ISpS Lyapunov function, $V(x,y)\hspace{-0.25em}=\hspace{-0.25em}\frac{1}{2}\lVert x_1-y_1\rVert^2+\frac{1}{2}\lVert x_2+\lambda_1x_1-y_2-\lambda_1y_1\rVert^2$. We define $e_1=x_1-y_1$, and $e_2=x_2+\lambda_1x_1-y_2-\lambda_1y_1$ and the Lyapunov function is $V(e_1,e_2)=\frac{1}{2}\lVert e_1\rVert^2+\frac{1}{2}\lVert e_2\rVert^2$. For the chosen Lyapunov function to satisfy Definition \ref{def:deltaISpSLyapunov}, we start showing that the Lyapunov function is upper bounded by $\bar{\alpha}(\lVert x-y \rVert)$.
\begin{align*}
    &V(x,y)=\frac{1}{2}\lVert x_1-y_1\rVert^2+\frac{1}{2}\lVert x_2+\lambda_1x_1-y_2-\lambda_1y_1\rVert^2\\&\leq \frac{1}{2}\lVert x_1-y_1\rVert^2+\lVert x_2-y_2\rVert^2+\lambda_1^2\lVert x_1-y_1\rVert^2\\
    &\leq \max{\biggl(\frac{1}{2}+\lambda_1^2,1\biggr)}\lVert x-y \rVert^2:=\bar{\alpha}(\lVert x-y \rVert).
\end{align*}
Now we show that the chosen Lyapunov function is lower bounded by $\underline{\alpha}(\lVert x-y \rVert)$.
\begin{align*}
    &V(x,y)=\frac{1}{2}\lVert x_1-y_1\rVert^2+\frac{1}{2}\lVert x_2+\lambda_1x_1-y_2-\lambda_1y_1\rVert^2\\
    &= \frac{1}{2}(\lVert x_1-y_1\rVert^2+\lVert x_2-y_2\rVert^2+\lambda_1^2\lVert x_1-y_1\rVert^2 +2\lambda_1(x_2-y_2)^\top (x_1-y_1))  \\& \geq \frac{1}{2}\bigl((1+\lambda_1^2)\lVert x_1-y_1\rVert^2+\lVert x_2-y_2\rVert^2-2\lambda_1\lVert x_1-y_1\rVert \lVert x_2-y_2\rVert \\
    &\geq \frac{1}{2} \lVert x-y \rVert^2:=\underline{\alpha}(\lVert x-y \rVert).
\end{align*}
Condition~(\ref{item:lyapunov1}) of Definition~\ref{def:deltaISpSLyapunov} is satisfied. Next, to check if the Lyapunov function satisfies condition~(\ref{item:lyapunov2}), we consider $\dot{e}_1=\dot{x}_1-\dot{y}_1=x_2-y_2$ and $e_2=x_2-y_2+\lambda_1 e_1$. This leads to $\dot{e}_1=x_2-y_2=e_2-\lambda_1 e_1$ and, $e_2=x_2+\lambda_1x_1-y_2-\lambda_1y_1$, $\dot{e}_2=(f(x)+g(x)u_1-f(y)-g(y)u_2)+\lambda_1(x_2-y_2)$. Thus, $\dot{V}(e_1,e_2)=e_1^\top \dot{e}_1+e_2^\top \dot{e}_2$. With proper substitution of error derivatives, the controller as in \eqref{eqn:controlLaw}, and applying Young's inequality,
\begin{align}
&\dot{V}(e_1,e_2)=e_1^\top (e_2-\lambda_1 e_1)+e_2^\top (f(x)+g(x)u_1-f(y)-g(y)u_2)+\lambda_1 e_2^\top(x_2-y_2),\nonumber\\
    &=e_1^\top e_2 -\lambda_1 \lVert e_1 \rVert^2+ e_2^\top(f(x)+g(x)g(x)^{-1}(-\mu(x)-(\lambda_1+\lambda_2)x_2-\lambda_1\lambda_2 x_1+\upsilon+\text{ReLU}(\phi_0-\phi_1\upsilon )\frac{\phi_1^\top}{\phi_1 \phi_1^\top})\nonumber\\&-f(y)-g(y)g(y)^{-1}(-\mu(y)-(\lambda_1+\lambda_2)y_2-\lambda_1\lambda_2 y_1+\upsilon'+\text{ReLU}(\phi_0'-\phi_1'\upsilon')\frac{\phi_1'^\top}{\phi_1'\phi_1'^\top}))+\lambda_1 e_2^\top(x_2-y_2),\nonumber \\
    &=e_1^\top e_2 -\lambda_1 \lVert e_1 \rVert^2+e_2^\top (f(x)-\mu(x))+e_2^\top (\mu(y)-f(y))-(\lambda_1+\lambda_2)e_2^\top (x_2\hspace{-0.25em}-\hspace{-0.25em}y_2)\hspace{-0.25em}-\hspace{-0.25em}\lambda_1\lambda_2  e_2^\top (x_1\hspace{-0.25em}-\hspace{-0.25em}y_1)\hspace{-0.25em}+\hspace{-0.25em}e_2^\top(\upsilon\hspace{-0.25em}-\hspace{-0.25em}\upsilon')\nonumber\\&+e_2^\top (\text{ReLU}(\phi_0\hspace{-0.25em}-\hspace{-0.25em}\phi_1\upsilon )\frac{\phi_1^\top}{\phi_1 \phi_1^\top}-\text{ReLU}(\phi_0'\hspace{-0.25em}-\hspace{-0.25em}\phi_1'\upsilon')\frac{\phi_1'^\top}{\phi_1'\phi_1'^\top})+\lambda_1 e_2^\top(e_2\hspace{-0.25em}-\hspace{-0.25em}\lambda_1 e_1)\nonumber
     \\&\leq e_1^\top e_2 \hspace{-0.3em}-\hspace{-0.3em}\lambda_1 \lVert e_1 \rVert^2 \hspace{-0.3em}+\hspace{-0.3em} \lVert e_2 \rVert \lVert f(x)\hspace{-0.3em}-\hspace{-0.3em}\mu(x) \rVert   + \lVert e_2 \rVert \lVert \mu(y)-f(y) \rVert -\lambda_2 \lVert e_2 \rVert^2 + e_2^\top(\upsilon-\upsilon')\nonumber\\& +e_2^\top (\text{ReLU}(\phi_0-\phi_1\upsilon )\frac{\phi_1^\top}{\phi_1 \phi_1^\top}-\text{ReLU}(\phi_0'-\phi_1'\upsilon')\frac{\phi_1'^\top}{\phi_1'\phi_1'^\top}),\nonumber\\
    & \leq  -\lambda_1 \lVert e_1 \rVert^2 -\lambda_2 \lVert e_2 \rVert^2 + \frac{1}{2}(\lVert e_1 \rVert^2+\lVert e_2 \rVert^2)+\frac{1}{2}(\lVert e_2 \rVert^2+\lVert f(x)\hspace{-0.25em}-\hspace{-0.25em}\mu(x) \rVert^2)+\frac{1}{2}(\lVert e_2 \rVert^2\hspace{-0.25em}+\hspace{-0.25em}\lVert \mu(y)\hspace{-0.25em}-\hspace{-0.25em}f(y) \rVert^2)\nonumber\\&+\frac{1}{2}(\lVert e_2 \rVert^2 + \lVert \upsilon -\upsilon'\rVert^2)+\frac{1}{2\theta}\lVert e_2 \rVert^2+ \frac{\theta}{2} \lVert \text{ReLU}(\phi_0-\phi_1\upsilon )\frac{\phi_1^\top}{\phi_1 \phi_1^\top}\hspace{-0.25em}-\hspace{-0.25em}\text{ReLU}(\phi_0'-\phi_1'\upsilon')\frac{\phi_1'^\top}{\phi_1'\phi_1'^\top} \rVert^2,\nonumber\\
    &\leq -(\lambda_1\hspace{-0.25em}-\hspace{-0.25em}\frac{1}{2})\lVert e_1 \rVert^2\hspace{-0.25em} -\hspace{-0.25em}(\lambda_2-2\hspace{-0.25em}-\hspace{-0.25em}\frac{1}{2\theta})\lVert e_2 \rVert^2+\frac{1}{2}\lVert \upsilon\hspace{-0.25em} -\hspace{-0.25em}\upsilon'\rVert^2 \hspace{-0.25em}+\hspace{-0.25em}\frac{1}{2}(\lVert\eta\rVert^2\lVert\Bar{\rho}\rVert^2\hspace{-0.35em}+\hspace{-0.35em}\lVert\eta\rVert^2\lVert\Bar{\rho}\rVert^2)\hspace{-0.25em}+\hspace{-0.25em}2\theta \hspace{-1em}\sup_{x \in X, \upsilon \in W}  \hspace{-1em}\lVert \text{ReLU}(\phi_0\hspace{-0.25em}-\hspace{-0.25em}\phi_1\upsilon )\frac{\phi_1^\top}{\phi_1 \phi_1^\top}\rVert^2,\nonumber\\
    &\leq -\kappa_1\lVert e_1 \rVert^2-\kappa_2 \lVert e_2 \rVert^2+\frac{1}{2}\lVert \upsilon -\upsilon'\rVert^2 +\lVert\eta\rVert^2\lVert\Bar{\rho}\rVert^2\hspace{-0.25em}+\hspace{-0.25em}2\theta \sup_{x \in X, \upsilon \in W} \lVert \text{ReLU}(\phi_0\hspace{-0.25em}-\hspace{-0.25em}\phi_1\upsilon )\frac{\phi_1^\top}{\phi_1 \phi_1^\top} \rVert^2,\nonumber\\
    &\leq -\kappa (\frac{1}{2}\lVert e_1 \rVert^2+\lVert e_2 \rVert^2)+ \sigma(\lVert \upsilon -\upsilon'\rVert) +\Tilde{c},\nonumber\\
    &\leq -\kappa V(e_1,e_2)+ \sigma(\lVert \upsilon -\upsilon'\rVert) +\Tilde{c}, \label{proof:lyapunovIneq}
\end{align}
where $\theta>0$, $\lambda_1 \geq \frac{1}{2}$, $\lambda_2\geq 2+\frac{1}{2\theta}$, $\kappa_1=\lambda_1-\frac{1}{2}$, $\kappa_2=\lambda_2-2-\frac{1}{2\theta}$, $\kappa:=2\min\{\kappa_1,\kappa_2\}$, $\sigma(r):=\frac{1}{2}r^2$, and $\Tilde{c}:=\lVert\eta\rVert^2\lVert\Bar{\rho}\rVert^2+2\theta \sup_{x \in X, \upsilon \in W} \lVert \text{ReLU}(\phi_0-\phi_1\upsilon )\frac{\phi_1^\top}{\phi_1 \phi_1^\top}\rVert$.
Thus \eqref{proof:lyapunovIneq} satisfies condition $\eqref{item:lyapunov2}$ of Definition \ref{def:deltaISpSLyapunov}. Thus, the Lyapunov function $V$ satisfies all conditions of Definition~\ref{def:deltaISpSLyapunov}. Thus, the system is $\delta$-ISpS.  Now, from Theorem~\ref{theorem:ISpSLyapunov}, the control law renders the system $\delta$-ISpS with respect to the input $\upsilon$. Now using Theorem~\ref{theorem:ISpSLyapunov}, we get:
   \begin{align*}
        \lVert x(t)-y(t))\rVert \leq\beta(\lVert x(0)-y(0)\rVert,t)+\gamma(\lVert {\upsilon}-{\upsilon}'\rVert_\infty)+c,
    \end{align*}
where $\beta(r,s)=\underline{\alpha}^{-1}(3e^{-\kappa s}\overline{\alpha}(r))
    =\sqrt{6\max\{0.5+\lambda_1^2,1\}r^2e^{-\kappa s}}$, $\gamma(r)=\underline{\alpha}^{-1}\left(\frac{3}{\kappa}{\sigma}(r)\right)=\sqrt{\frac{6}{\kappa}{\sigma}(r)}, \forall r,s\in\R^+_0$, $c=\underline{\alpha}^{-1}\left(\frac{3\Tilde{c}}{\kappa}\right)=\sqrt{\frac{6\Tilde{c}}{\kappa}}$ are the class-$\mathcal{KL}$, class-$\mathcal{K}_\infty$ functions and constant, respectively. Since the model error between the approximation and the actual function is bounded probabilistically, the system is made $\delta$-ISpS with the same high probability $(1-\epsilon)^n$.

This concludes that the control law in ~\eqref{eqn:controlLaw} renders the control system $\Sigma$, given by~\eqref{eqn:ELsys}, $\delta$-ISpS with respect to the input $\upsilon$ while ensuring invariance in the set $\mathcal{C}_d$.
\end{proof}


\begin{remark}
    Since we have a nonzero value for $\Tilde{c}$, a non-vanishing perturbation that produces a mismatch in trajectory even after an arbitrarily long time can be quantized to be ${c}$, where $\underline{\alpha}\in\mathcal{K}_\infty$. This signifies that even at an arbitrarily large value of $t$, the trajectories might not exactly converge to each other but might differ by a value $c$. The control law \eqref{eqn:controlLaw}, structured as $u = u_{\text{nom}} + \Delta u$, is an analytical solution to the CBF optimization problem. The $\Delta u$ term (the $\text{ReLU}$ function) serves as a minimal safety filter, ensuring the invariance constraint $\dot{h} + \alpha(h) \geq -\chi(d)$ is satisfied. This design guarantees that invariance is maintained while the $\delta$-ISpS stability objective is preserved.
\end{remark}

\section{Case Study-Two-link manipulator}\label{section:study}
The dynamics of a two link manipulator  \cite{2Regbook}  can be written as, $\dot \xi_1=\xi_2$, $\dot \xi_2 = {M^{-1}(\xi_1)\left[-H(\xi_1,\xi_2)-c(\xi_1)\right]}+{M^{-1}(\xi_1)}\tau,$ where 
$\xi=[\xi_1^{\top},\xi_2^{\top}]^{\top}$, $\xi_1=\mathbf{p}$, $\xi_2=\dot{\mathbf{p}}$, $\mathbf{p}(t)=[x_1,x_2]^{\top}$, $x_1$ and $x_2$ are the angles of the two revolute joints and $\tau$ represents the torque inputs to the joints. The parameter matrices are chosen as in \cite{2Regbook}. The mass and length of both the links are $m=1kg$, and $l=1m$. It is obvious that Assumptions \ref{assume:RKHS}-\ref{assume:measurements} hold for $f$. We consider a compact set as state space given by $X=\{x \in \left[-\frac{\pi}{2},\frac{\pi}{2}\right]\times\left[-\frac{\pi}{2},\frac{\pi}{2}\right]\times\left[-0.2,0.2\right]\times\left[-0.2,0.2\right] | h(x):=1-\sqrt[p]{\bigl(\frac{2x_1}{\pi}\bigr)^p+\bigl(\frac{2x_2}{\pi}\bigr)^p+\bigl(\frac{x_3}{0.2}\bigr)^p+\bigl(\frac{x_4}{0.2}\bigr)^p} \geq 0\}$,  where $p$ is chosen as 20, to almost approximate the state space as a hypercube instead of a hyper-ellipsoid. Thus Assumption~\ref{assume:cbf} is satisfied. We train the unknown model $f$ using a Gaussian process with $800$ data samples of $x$ and $y=f(x)+w$, where $w\sim\mathcal{N}(0,\rho_f^2\mathbf{I}_2)$, $\rho_f=0.01$, collected by simulating the system with several randomly selected initial states. The considered kernel is $k_i(x,x')=\rho_{k_i}^2\exp{\left(\sum_{j=1}^4\frac{(x_j-x'_j)^2}{-2l^2_{ij}}\right)}$, $i=1,2$, where $\rho_{k_1}=115$, $\rho_{k_2}=186$, $l_{11}=1.54$, $l_{12}=0.541$, $l_{13}=136$, $l_{14}=120$, $l_{21}=1.77$, $l_{22}=0.489$, $l_{23}=122$ and $l_{24}=131$. We computed the mean and variance as shown in (\ref{eqn:mean}) and (\ref{eqn:variance}) with $\lVert\Bar{\rho}\rVert=0.366$. For a value of $\lVert\eta\rVert\lVert\Bar{\rho}\rVert=0.19$, the probability interval is $[0.9803,0.9822]$ with a confidence of $1-10^{-10}$, $25^4$ realizations. We designed the controller as shown in Theorem \ref{theorem:Control} with the values of $\theta=0.001$, $\lambda_1 =1.5$, $\lambda_2=503$, we get ${c}=8.6$. Fig.~\ref{fig:2Rtraj} shows the evolution of the system starting at two different initial conditions, which are converging towards each other. Fig.~\ref{fig:2Rdiff} shows the closeness of the trajectories and the bounds on the closeness for the system starting at two different initial conditions and two different inputs. 
\begin{figure}
\vspace{0.7em}
\centering
    \includegraphics[width=0.75\linewidth]{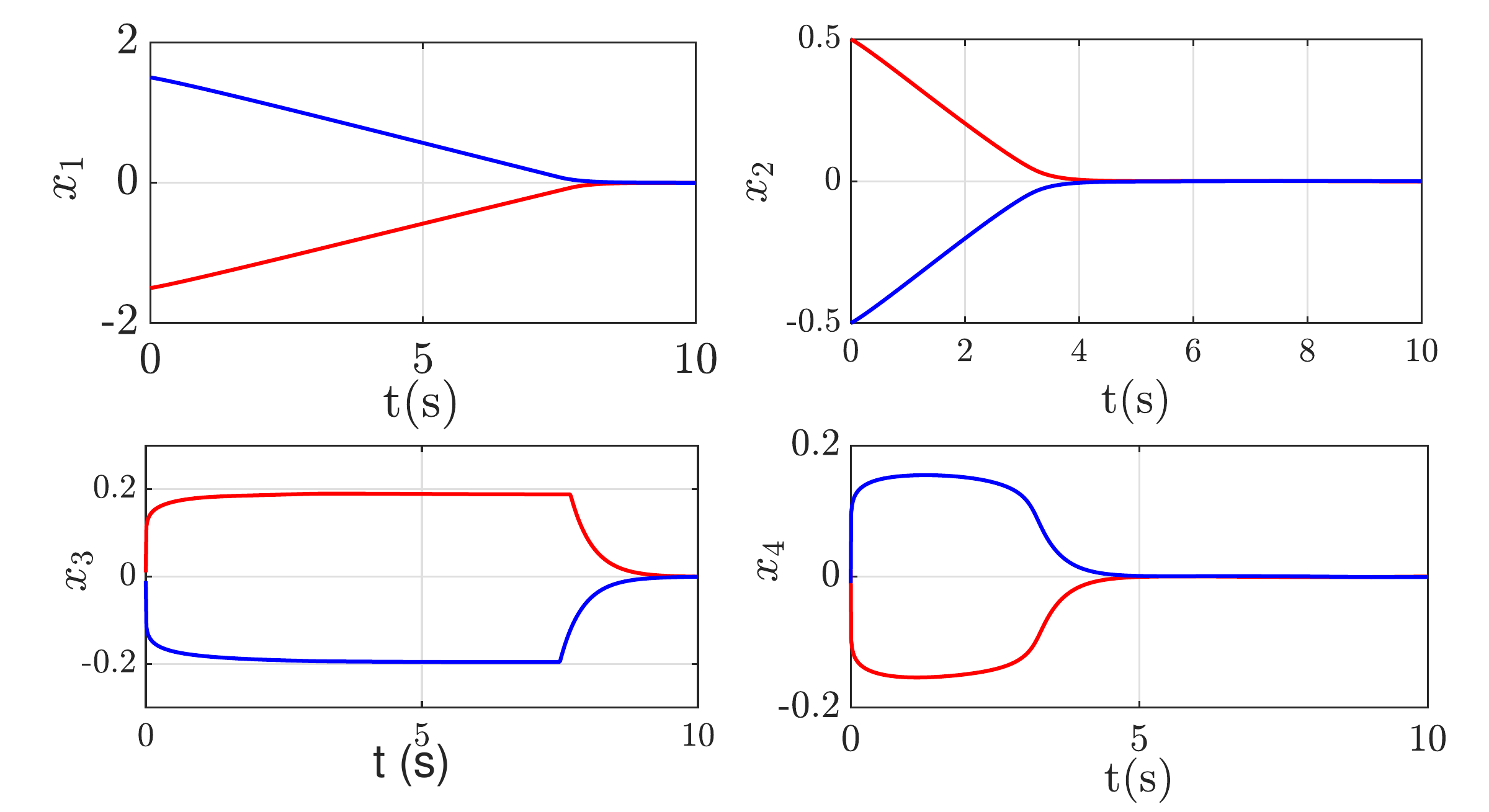}
    \caption{Evolution of the system under a inputs $\upsilon=[\sin(t),\cos(t)]$ and $\upsilon'=[\cos(t),\sin(t)]$ with the initial states $x_0=[-1.5,0.5,0.01,0.01]$ (blue line) and $x_0=[1.5,-0.5,-0.01,-0.01]$ (orange line).}
    \label{fig:2Rtraj}
\end{figure}
\begin{figure}
\centering
    \includegraphics[width=0.55\linewidth]{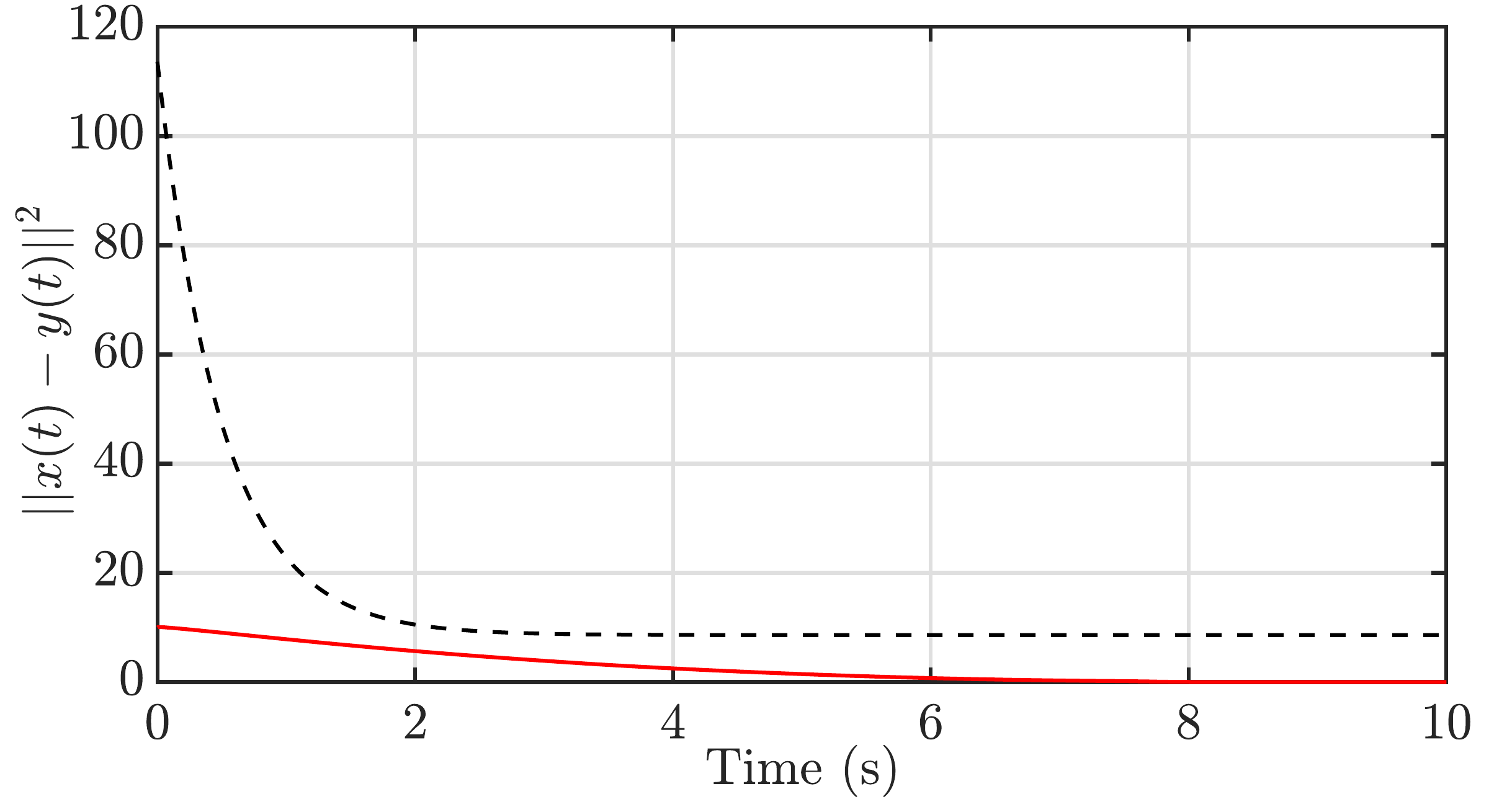}
    \caption{Distance between the trajectories of the controlled system under the inputs $\upsilon=[\sin(t),\cos(t)]$ and $\upsilon'=[\cos(t),\sin(t)]$ with the initial conditions $x_0=[-1.5,0.5,0.01,0.01]$ and $y_0=[1.5,-0.5,-0.01,-0.01]$ respectively.
    }
    \label{fig:2Rdiff}
\end{figure}
\section{Conclusion}\label{section:conclude}
In this paper, we have presented a control design scheme for partially unknown systems that synthesizes controllers ensuring incremental input-to-state practical stability ($\delta$-ISpS) simultaneously guaranteeing invariance. We have defined the notion of $\delta$-ISpS and have also provided a Lyapunov function-based characterization for the same. The case study we have presented demonstrates both the convergence of system trajectories and establishes bounds on the closeness of the system trajectories starting from different initial states.



\bibliographystyle{ieeetr} 
\bibliography{sources} 

\begin{thebibliography}{10}

\bibitem{synchComplex}
G.~Russo and M.~di~Bernardo, ``Contraction theory and master stability function: Linking two approaches to study synchronization of complex networks,'' {\em IEEE Transactions on Circuits and Systems II: Express Briefs}, vol.~56, no.~2, pp.~177--181, 2009.

\bibitem{synchOsci}
G.-B. Stan and R.~Sepulchre, ``Analysis of interconnected oscillators by dissipativity theory,'' {\em IEEE Transactions on Automatic Control}, vol.~52, no.~2, pp.~256--270, 2007.

\bibitem{InterIncre}
B.~S. Dey, I.~N. Kar, and P.~Jagtap, ``On incremental stability of interconnected switched systems,'' {\em International Journal of Systems Science}, vol.~0, no.~0, pp.~1--16, 2025.

\bibitem{girard2014approximately}
A.~Girard, ``Approximately bisimilar abstractions of incrementally stable finite or infinite dimensional systems,'' in {\em 53rd IEEE conference on decision and control}, pp.~824--829, 2014.

\bibitem{jagtap2020symbolic}
P.~Jagtap and M.~Zamani, ``Symbolic models for retarded jump--diffusion systems,'' {\em Automatica}, vol.~111, p.~108666, 2020.

\bibitem{angeli}
D.~Angeli, ``A {L}yapunov approach to incremental stability properties,'' {\em IEEE Transactions on Automatic Control}, vol.~47, no.~3, pp.~410--421, 2002.

\bibitem{zamanicharacterize}
M.~Zamani, N.~{van de Wouw}, and R.~Majumdar, ``Backstepping controller synthesis and characterizations of incremental stability,'' {\em Systems \& Control Letters}, vol.~62, no.~10, pp.~949--962, 2013.

\bibitem{pushpakHamilton}
P.~Jagtap and M.~Zamani, ``Backstepping design for incremental stability of stochastic {H}amiltonian systems with jumps,'' {\em IEEE Transactions on Automatic Control}, vol.~63, no.~1, pp.~255--261, 2018.

\bibitem{zamaninonsmooth}
M.~Zamani and N.~van~de Wouw, ``Controller synthesis for incremental stability: Application to symbolic controller synthesis,'' in {\em European Control Conference (ECC)}, pp.~2198--2203, 2013.

\bibitem{backsteppingzamani}
M.~Zamani and P.~Tabuada, ``Backstepping design for incremental stability,'' {\em IEEE Transactions on Automatic Control}, vol.~56, no.~9, pp.~2184--2189, 2011.

\bibitem{GPBook}
C.~K.~I. Williams and C.~E. Rasmussen, {\em Gaussian processes for machine learning}, vol.~2.
\newblock MIT press Cambridge, MA, 2006.

\bibitem{tracking}
T.~Beckers, D.~Kuli\'{c}, and S.~Hirche, ``Stable {G}aussian process based tracking control of {E}uler–{L}agrange systems,'' {\em Automatica}, vol.~103, p.~390–397, may 2019.

\bibitem{feedbackLinearization}
J.~Umlauft and S.~Hirche, ``Feedback linearization based on {G}aussian processes with event-triggered online learning,'' {\em IEEE Transactions on Automatic Control}, vol.~65, no.~10, pp.~4154--4169, 2020.

\bibitem{hircheControl}
J.~Umlauft, L.~Pöhler, and S.~Hirche, ``An uncertainty-based control {L}yapunov approach for control-affine systems modeled by {G}aussian process,'' {\em IEEE Control Systems Letters}, vol.~2, no.~3, pp.~483--488, 2018.

\bibitem{JagtapGP}
P.~Jagtap, G.~J. Pappas, and M.~Zamani, ``Control barrier functions for unknown nonlinear systems using {G}aussian processes,'' in {\em 59th IEEE Conference on Decision and Control (CDC)}, pp.~3699--3704, 2020.

\bibitem{ISpS}
A.~Mironchenko, ``Criteria for input-to-state practical stability,'' {\em IEEE Transactions on Automatic Control}, vol.~64, no.~1, pp.~298--304, 2019.

\bibitem{kolathaya2018input}
S.~Kolathaya and A.~D. Ames, ``Input-to-state safety with control barrier functions,'' {\em IEEE control systems letters}, vol.~3, no.~1, pp.~108--113, 2018.

\bibitem{jagtap2020control}
P.~Jagtap, G.~J. Pappas, and M.~Zamani, ``Control barrier functions for unknown nonlinear systems using {G}aussian processes,'' in {\em 59th IEEE Conference on Decision and Control (CDC)}, pp.~3699--3704, 2020.

\bibitem{HircheGP}
J.~Umlauft, L.~Pöhler, and S.~Hirche, ``An uncertainty-based control {L}yapunov approach for control-affine systems modeled by {G}aussian process,'' {\em IEEE Control Systems Letters}, vol.~2, no.~3, pp.~483--488, 2018.

\bibitem{InfoGain}
N.~Srinivas, A.~Krause, S.~M. Kakade, and M.~W. Seeger, ``Information-theoretic regret bounds for {G}aussian process optimization in the bandit setting,'' {\em IEEE Transactions on Information Theory}, vol.~58, no.~5, pp.~3250--3265, 2012.

\bibitem{Khalil:1173048}
H.~K. Khalil, {\em {Nonlinear systems; 3rd ed.}}
\newblock Upper Saddle River, NJ: Prentice-Hall, 2002.

\bibitem{liu2019compositional}
S.~Liu and M.~Zamani, ``Compositional synthesis of almost maximally permissible safety controllers,'' in {\em American Control Conference (ACC)}, pp.~1678--1683, 2019.

\bibitem{deltaISS}
M.~Zamani and R.~Majumdar, ``A {L}yapunov approach in incremental stability,'' in {\em 50th IEEE Conference on Decision and Control and European Control Conference}, pp.~302--307, 2011.

\bibitem{2Regbook}
R.~M. Murray, S.~S. Sastry, and L.~Zexiang, {\em A Mathematical Introduction to Robotic Manipulation}.
\newblock USA: CRC Press, Inc., 1st~ed., 1994.

\end{thebibliography}

\end{document}